\def \be{\begin{equs}}
\def \ee{\end{equs}}
\def \dat{\bm{y}_{\mathrm{obs}}} %data.
\def \datt{\bm{y}_{\bt'}} % synthetic data from the model.
\newcommand{\iidsim}{\stackrel{\mbox{\tiny iid}}{\sim}}
\def\be{\begin{equs}}
\def \ee{\end{equs}}
\newtheorem{thm}{Theorem}
\newtheorem{lemma}[thm]{Lemma}
\newtheorem{cor}[thm]{Corollary}
\newtheorem{prop}[thm]{Proposition}
\def \bt{\bm{\theta}}
\def \E{\mathbb{E}}
\def \by{\bm{y}}
\def \cX{\mathcal{X}}
\title{The Use of a Single Pseudo-Sample \\in Approximate Bayesian Computation}
\author[1,2]{Luke Bornn\thanks{bornn@stat.harvard.edu}}
\author[2]{Natesh S. Pillai}%\thanks{pillai@stat.harvard.edu}}
\author[3]{Aaron Smith}%\thanks{asmi28@uottawa.ca}}
\author[4]{Dawn Woodard}%\thanks{http://people.orie.cornell.edu/woodard}}
\affil[1]{Department of Statistics and Actuarial Science, Simon Fraser University}
\affil[2]{Department of Statistics, Harvard University}
\affil[3]{Department of Mathematics and Statistics, University of Ottawa}
\affil[4]{School of Operations Research and Information Engineering, Cornell University}
\begin{document}

\maketitle

%\jname{Biometrika}
%% The year, volume, and number are determined on publication
%\jyear{?}
%\jvol{99}
%\jnum{1}
%% The \doi{...} and \accessdate commands are used by the production team
%\doi{10.1093/biomet/asm023}
%\accessdate{Submitted to Biometrika}
%copyrightinfo{\goodbreak {\em Printed in Great Britain}}

%% These dates are usually set by the production team
%\received{April 2012}
%\revised{September 2012}

%% The left and right page headers are defined here:
%% The left and right page headers are defined here:

\begin{abstract}
We analyze the computational efficiency of approximate Bayesian
computation (ABC), which approximates a likelihood function by drawing
pseudo-samples from the associated model. For the rejection sampling
version of ABC, it is known that multiple pseudo-samples cannot
substantially increase (and can substantially decrease) the efficiency
of the algorithm as compared to employing a high-variance estimate
based on a single pseudo-sample.  We show that this conclusion also
holds for a Markov chain Monte Carlo version of ABC,  implying
that it is unnecessary to tune the number of pseudo-samples used in
ABC-MCMC. This conclusion is in contrast to particle MCMC methods, for
which increasing the number of particles can provide large gains in
computational efficiency.
\end{abstract}

\section{Introduction}

Approximate Bayesian computation (ABC) is a family of algorithms for
Bayesian inference that address situations where the likelihood
function is intractable to evaluate but where one can obtain samples
from the model.  These methods are now widely used in population
genetics, systems biology, ecology, and other areas, and are
implemented in an array of popular software packages \citep{tavare1997inferring,marin2012approximate}.  Let $\bt \in
\Theta$ be the parameter of interest with prior density
$\pi(\bm{\theta})$, $\dat \in \mathcal{Y}$ be the observed data and
$p(\bm{y}|\bt)$ be the model.  The simplest ABC
algorithm first generates a sample $\bm{\theta'} \sim
\pi(\bm{\theta})$ from the prior, then generates a {\it
  pseudo-sample} $\bm{y}_{\bt'} \sim p( \cdot | \bm{\theta}')$.
Conditional on $\bm{y}_{\bt'} = \dat$, the distribution of
$\bm{\theta'}$ is the posterior distribution $\pi( \bt | \dat)\propto
\pi(\bt) p(\dat | \bt)$.  For all but the most trivial discrete
problems, the probability that $\bm{y}_{\bt'} = \dat$ is either zero
or very small. Thus the condition of exact matching of pseudo-data to
the observed data is typically relaxed to $\|\eta(\dat) -
\eta(\bm{y}_{\bt'})\| < \epsilon,$ where $\eta$ is a low-dimensional
summary statistic, $\|\cdot\|$ is a distance function, and
$\epsilon>0$ is a tolerance level.  The resulting algorithm gives
samples from the target density $\pi_\epsilon$ that is proportional to
$\left[ \pi(\bt)\int \mathbf{1}_{\{\| \eta(\dat) -
    \eta(\bm{y})\|<\epsilon\}}p(\bm{y}|\bt) d\bm{y}\right]$.  If the
tolerance $\epsilon$ is small enough and the statistic(s) $\eta$ good
enough, then $\pi_{\epsilon}$ can be a good approximation to
$\pi(\bt|\dat)$.

A generalized version of this method \citep{wilk:08} is given in
Algorithm \ref{abc-gen}, where $K(\eta)$ is an unnormalized
probability density function, $M$ is an arbitrary number of
pseudo-samples, and $c$ is a constant satisfying $c \geq \sup_\eta
K(\eta)$.

\begin{algorithm}
  \For{$t=1$ to $n$}{
    \Repeat{$u <  \frac{1}{cM} \sum_{i=1}^M K(\eta(\dat) -\eta(\bm{y}_{i,\bt'}))$}{
    Generate $\bm{\theta}' \sim \pi(\bm{\theta})$,
     $\bm{y}_{i,\bt'} \iidsim p( \bm{y} | \bt')$ for $i=1,\dots, M$,
    and $u \sim \mbox{Uniform}[0,1]$.
    }
    Set $\bt^{(t)} = \bt'$\;
  }
\caption{Generalized ABC\label{abc-gen}}
\end{algorithm}

Using the ``uniform kernel'' $K(\eta) = \mathbf{1}_{\{\|\eta\| <
  \epsilon\}}$ and taking $M= c=1$ yields the version described
  above.  Algorithm~\ref{abc-gen} yields samples $\bt^{(t)}$ from the
 kernel-smoothed posterior density
\be
\label{eqn:ker} \pi_K(\bt | \dat) \propto \pi(\bt) \int K\left(
\eta(\dat)- \eta(\by) \right)\, p(\by|\bt)\, d\by.  \ee Although
Algorithm~\ref{abc-gen} is nearly always presented in the special case
with $M=1$ \citep{wilk:08}, it is easily verified that $M > 1$
still yields samples from $\pi_K(\bt | \dat)$.  

%Since our focus is on
%the algorithmic aspects of ABC, we will not be concerned with the
%choice of $\eta$ and $K$; see instead
%\cite{robert2011,fearnhead12} and \cite{marin14}.

Many variants of Algorithm~\ref{abc-gen} exist. For any unbiased, nonnegative estimator $\hat{\mu}(\bt)$ of $\pi_K( \bt | \dat)$ and any reversible transition kernel $q$ on $\Theta$,  Algorithm
\ref{pseudomarg} gives a Markov chain Monte Carlo (MCMC) version that is based on the pseudo-marginal algorithm of \citep{andrieu2009the-pseudo-marginal} (see also \citep{marj:03,wilk:08,Andrieu2014establishing-some-order}). To obtain an algorithm that is similar to Algorithm~\ref{abc-gen}, a good choice for $\hat{\mu}(\bt)$ is \be
 \label{eqn:KDE} \hat{\pi}_{K,M}(\bt| \dat) \equiv \pi(\bt){1 \over
 M} \sum_{i=1}^M K\left(\eta(\dat) - \eta(\by_{i,\bt})\right), \ee where $M \in \mathbb{N}$. When we refer to Algorithm \ref{pseudomarg}, we always use this family of weights unless stated otherwise. Despite the fact that the expression (\ref{eqn:KDE}) depends on $M$, as $n\rightarrow\infty$ the  distribution of $x^{(n)}$ in Algorithm~\ref{pseudomarg} converges to the same distribution $\pi_K$ under mild conditions  \citep{Andrieu2014establishing-some-order}. Although the choice of $M$ in \eqref{eqn:KDE} generally does not affect the limiting distribution of $x^{(n)}$, it does affect the evolution of the stochastic process $\{ x^{(n)} \}_{n \in \mathbb{N}}$. %Under mild  conditions on $K$ and $p(\by|\bt)$ (cf. \cite{wied2012consistency}), the kernel density estimator  \eqref{eqn:KDE} converges uniformly as $M \rightarrow \infty$ and the finite sample paths $x^{(1)}, \ldots, x^{(n)}$ obtained from Algorithm \ref{pseudomarg} converge in probability to  those of the usual Metropolis-Hastings algorithm with target $\pi_K$ (i.e., for each $t$ generate $x' \sim q(\cdot \vert x^{(t-1)})$ and $u \sim \mbox{Uniform}[0,1]$, then if $u  \leq \frac{\pi_K(x' |\dat) q(x^{(t-1)}|  x')}{\pi_K(x^{(t-1)} |\dat) q(x'|  x^{(t-1)})}$ set $x^{(t)} = x'$ and  otherwise set $x^{(t)} = x^{(t-1)}$). 

\begin{algorithm}
  Initialize $x^{(0)}$ and generate $T^{(0)} = \hat{\mu}(x^{(0)})$\;
  \For{$t=1$ to $n$}{
    Generate $x'\sim q(\cdot \vert x^{(t-1)})$,
     $T' = \hat{\mu}(x')$,
    and $u \sim \mbox{Uniform}[0,1]$\;
    \eIf{$u \leq \frac{T' q(x^{(t-1)}| x')}{T^{(t-1)} q(x'| x^{(t-1)})}$}{
      Set $\left( x^{(t)}, T^{(t)} \right)$ = $\left( x', T' \right)$\;
    }{
      Set $\left( x^{(t)}, T^{(t)} \right)$ = $\left( x^{(t-1)}, T^{(t-1)} \right)$\;
    }
  }
\caption{Pseudo-Marginal MCMC\label{pseudomarg}}
\end{algorithm}

We address the effect of $M$ on the efficiency of
Algorithm~\ref{pseudomarg} when using the weight \eqref{eqn:KDE}.  Increasing the number of pseudo-samples $M$
decreases the variance of \ref{pseudomarg}, which one might
think could improve the efficiency of Algorithm
\ref{pseudomarg}.  Indeed, increasing $M$ does decrease the asymptotic variance of the associated Monte Carlo estimates (\cite{Andrieu2014establishing-some-order}).  However, increasing $M$ also increases the computational cost of each step of the algorithm.  Although the tradeoff between these two factors is quite complicated, our main result in this paper gives a good compromise: the choice $M=1$ yields a running time within a factor of two of optimal.  We use natural definitions of running time
in the contexts of serial and parallel computing, which are extensions
of those used by \cite{pitt2012on-some},
\cite{sherlock2013on-the-efficiency}, and
\cite{Doucet2014efficient-implementation}, and which capture the time
required to obtain a particular Monte Carlo accuracy. Our definition
in the serial computing case is the number of pseudo-samples
generated, which is an appropriate measure of running time when
drawing pseudo-samples is more computationally intensive than the
other steps in Algorithm~\ref{pseudomarg}, and when the
expected computational cost of drawing each pseudo-sample is the same,
i.e.\ when there is no computational discounting due to having
pre-computed relevant quantities.

%A number of authors have recommended choosing $M>1$ in
%Algorithm~\ref{pseudomarg}, because this improves the accuracy of
%$\hat{\pi}_{K,M}(\bt| \dat)$.
%\cite{Andrieu2014establishing-some-order} showed that increasing $M$
%decreases the autocorrelation of the Markov chain, and improves the
%accuracy of the resulting Monte Carlo estimators for a fixed number of
%Markov chain iterations. However, increasing $M$ also increases the
%running time of each iteration of the chain (if the $M$ pseudo-samples
%are drawn serially), or increases the number of processors assigned to
%draw samples (if the pseudo-samples are drawn in parallel).  It is not
%immediately clear how to select $M$ to optimize this computational
%tradeoff, in the sense of minimizing the running time required to
%obtain a desired Monte Carlo accuracy.  The problem is particularly
%hard because the accuracy of Markov chain Monte Carlo estimators
%depends on the amount of autocorrelation of the Markov chain, which
%itself depends in a complex way on the characteristics of the target
%distribution and the construction of the Markov chain.

Several authors have drawn the conclusion that in many situations the approximately optimal value of $M$ in {\it pseudo-marginal} algorithms (a class of algorithms that includes Algorithm~\ref{pseudomarg}) is obtained by tuning
$M$ to achieve a particular variance for the estimator
$\hat{\pi}_{K,M}(\bt| \dat)$ \citep{pitt2012on-some,
  sherlock2013on-the-efficiency,Doucet2014efficient-implementation}. This often means a value of $M$ that is much larger than 1. We demonstrate that in Algorithm~\ref{pseudomarg} such a tuning process is
unnecessary, since near-optimal efficiency is obtained by using
estimates based on a single pseudo-sample
(Proposition~\ref{thm:unifcor} and Corollary~\ref{cor:M1best}); these estimates are lower-cost and higher-variance than those based on several pseudo-samples.  This result assumes only that the kernel $K(\eta)$ is the uniform kernel
$\mathbf{1}_{\{\|\eta\| < \epsilon\}}$ (the most common choice). In
particular, and in contrast to earlier work, it does not make any
assumptions about the target distribution $\pi(\bt | \dat)$.

Our result is in contrast to particle MCMC methods
\citep{andrieu2010particle}, which 
\citet{flury2011bayesian} demonstrated could require thousands, millions or more
particles to obtain sufficient accuracy in some realistic problems.  This difference between particle MCMC and
ABC-MCMC is largely due to the interacting nature of the particles in particle MCMC, allowing for better path samples.

\section{Efficiency of ABC and ABC-MCMC}\label{sec:mcmc}
 
For a measure $\mu$ on space $\cX$ let $\mu(f) \equiv \int f(x)
\mu(dx)$ be the expectation of a real-valued function $f$ with respect
to $\mu$ and let $L^2(\mu) =\{f: \mu(f^2) < \infty\}$ be the space of
functions with finite variance. For any reversible Markov transition kernel $H$ with stationary distribution $\mu$, any
function $f\in L^2(\mu)$, and Markov chain $X^{(t)}$ evolving
according to $H$, the MCMC estimator of $\mu(f)$ is $\bar{f}_n \equiv \frac{1}{n} \sum_{t=1}^{n} f(X^{(t)})$. The error of this estimator can be measured by the \textit{asymptotic variance}:
\be \label{eqn:nav} v(f,H) =
\lim_{n \rightarrow \infty} n \mathrm{Var}_{H} \left( \bar{f}_n
\right) \ee 
which is closely related to the autocorrelation of the
samples $f(X^{(t)})$ \citep{tierney1998ordering}.

If $H$ is {\it geometrically ergodic}, $v(f,H)$ is guaranteed to be
finite for all $f \in L^2(\mu)$, while in the non-geometrically
ergodic case $v(f, H)$ may or may not be finite \citep{robe:rose:08}.
When $v(f, H)$ is infinite our results still hold but
are not informative.  The fact that our results do not require geometric
ergodicity distinguishes them from many results on efficiency of MCMC
methods \citep{guan:kron:07,wood:schm:hube:09a}. When $X^{(t)}\iidsim \mu$, we define $v(f) = \mathrm{Var} \left( f(X^{(t)}) \right)$.

We will describe the running time of Algorithms~\ref{abc-gen} and
\ref{pseudomarg} in two cases: first, when the computations are done
serially, and second, when they are done in parallel across $M$
processors.  Using \eqref{eqn:nav}, the variance of $\bar{f}_n$ from a
single (reversible) Markov chain $H$ of length $n$ is roughly $v(f,H)/n$, so to
achieve variance $\delta >0$ in the serial context we need
$v(f,H)/\delta$ iterations.  Similarly, the variance of $\bar{f}_{n}$ from a collection of $M$ reversible Markov chains, each run for $n$ steps, is roughly $v(f,H)/(Mn)$. Thus, to achieve variance $\delta >0$ in the parallel context we need to run each Markov chain for $n > v(f,H)/(\delta M)$ iterations. 

Although our definitions make sense for any function $f$ of the two values $( \bm{\theta}^{(t)}, T^{(t)} )$ described by Algorithm \ref{pseudomarg}, throughout the rest of the note we restrict our attention to functions that depend only on the first coordinate, $\bm{\theta}^{(t)}$. That is, when discussing Algorithm \ref{pseudomarg} or other pseudo-marginal algorithms, we restrict our attention to functions that satisfy $f(\bm{\theta},T_{1}) = f(\bm{\theta},T_{2})$ for all $\bm{\theta}$ and all $T_{1}, T_{2}$. We slightly abuse this in our notation,   not distinguishing between a function $f: \Theta \mapsto \mathbb{R}$ of a single variable $\bm{\theta}$ and a function $f: \Theta \times \mathbb{R}^{+} \mapsto \mathbb{R}$ of two variables $(\bm{\theta}, T)$ that only depends on the first coordinate.

Let $Q_M$ be the transition kernel of Algorithm~\ref{pseudomarg}; like all
pseudo-marginal algorithms, $Q_M$ is reversible
\citep{andrieu2009the-pseudo-marginal}.  We assume that drawing
pseudo-samples is the slowest part of the computation, and that
drawing each pseudo-sample takes the same amount of time on average
(as also assumed in Pitt et al.\ 2012\nocite{pitt2012on-some},
Sherlock et al.\ 2013\nocite{sherlock2013on-the-efficiency}, Doucet et
al.\ 2014\nocite{Doucet2014efficient-implementation}).  Then the
running time of $Q_M$ in the serial context can be measured as the
number of iterations times the number of pseudo-samples drawn in each
iteration, namely $C_{f,Q_M,\delta}^{\mbox{\tiny ser}} \equiv M
v(f,Q_M)/\delta$.  In the context of parallel computation across $M$
processors, we compare two competing approaches that utilize all the
processors.  These are: (a) a single chain with transition kernel
$Q_M$, where the $M>1$ pseudo-samples in each iteration are drawn
independently across $M$ processors; and (b) $M$ parallel chains with
transition kernel $Q_1$.  The running time of these methods can be
measured as the number of required Markov chain iterations to obtain
accuracy $\delta$, namely $C_{f,Q_M,\delta}^{\mbox{\tiny M-par}}
\equiv v(f,Q_M)/\delta$ utilizing method (a) and
$C_{f,Q_1,\delta}^{\mbox{\tiny M-par}} \equiv v(f,Q_1)/(\delta M)$
utilizing method (b). Since both measures of computation time are based on the \textit{asymptotic} variance of the underlying Markov chain, they ignore the initial `burn-in' period and are most appropriate when the desired error $\delta$ is small. Other measures of computation time should be used if the Markov chains are being used to get only a very rough picture of the posterior (\textit{e.g.} to locate, but not explore, a single posterior mode). Note, however, that in practice there is typically no burn-in period for
Algorithm~\ref{pseudomarg}, since it is usually initialized using samples from ABC \citep{marin2012approximate}.

For Algorithm~\ref{abc-gen} the running time, denoted by $R_M$, is
defined analogously.  However, we must account for the fact that each
iteration of $R_M$ yields one accepted value of $\bt$, which may
require multiple proposed values of $\bt$ (along with the associated
computations, including drawing pseudo-samples).  The number of
proposed values of $\bt$ to get one acceptance has a geometric
distribution with mean equal to the inverse of the marginal
probability $p_{acc}(R_M)$ of accepting a proposed value of $\bt$.
So, similarly to $Q_M$, the running time of $R_M$ in the context
of serial computing can be measured as $C_{f,R_M,\delta}^{\mbox{\tiny
ser}} \equiv M v(f)/(\delta \, p_{acc}(R_M))$, and the computation
time in the context of parallel computing can be measured as
$C_{f,R_M,\delta}^{\mbox{\tiny M-par}} \equiv v(f)/(\delta \,
p_{acc}(R_M))$ utilizing method (a) and $C_{f,R_1,\delta}^{\mbox{\tiny
M-par}} \equiv v(f)/(\delta M p_{acc}(R_1))$ utilizing method (b).

Using these definitions, we first state the result that $M=1$ is
optimal for ABC (Algorithm 1).  This result is widely known but we
could not locate it in print, so we include it here for completeness.

\begin{lemma}\label{thm:abc}
The marginal acceptance probability of ABC (Algorithm \ref{abc-gen}) does not
depend on $M$.  For $M>1$ the running times
$C_{f,R_M,\delta}^{\mbox{\tiny ser}}$ and
$C_{f,R_M,\delta}^{\mbox{\tiny M-par}}$ of ABC in the serial and
parallel contexts satisfy
$C_{f,R_M,\delta}^{\mbox{\tiny
ser}} = M C_{f,R_1,\delta}^{\mbox{\tiny
ser}}$ and $C_{f,R_M,\delta}^{\mbox{\tiny
M-par}} = M C_{f,R_1,\delta}^{\mbox{\tiny
M-par}}$
for any $f \in L^2(\pi_K)$ and any $\delta>0$.
\end{lemma}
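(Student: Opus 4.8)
The plan is to reduce both identities to a single fact: the marginal acceptance probability $p_{acc}(R_M)$ of Algorithm~\ref{abc-gen} is the same for every $M$. Given this, the two running-time equalities are immediate substitutions into the definitions $C_{f,R_M,\delta}^{\mbox{\tiny ser}} = M v(f)/(\delta\, p_{acc}(R_M))$ and $C_{f,R_M,\delta}^{\mbox{\tiny M-par}} = v(f)/(\delta\, p_{acc}(R_M))$, once we also note that the variance $v(f)$ appearing there does not depend on $M$.

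First I would record why $v(f)$ is $M$-free. As already observed in the excerpt, Algorithm~\ref{abc-gen} returns i.i.d.\ draws $\bt^{(t)}$ from $\pi_K(\bt \mid \dat)$ for every $M$; hence the relevant error is governed by $v(f) = \mathrm{Var}_{\pi_K}(f)$, a quantity that depends only on $f$ and $\pi_K$, not on $M$. This is precisely why the ABC running times are written with $v(f)$ rather than an asymptotic variance.

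The heart of the argument is then to compute $p_{acc}(R_M)$. Conditioning on a single proposed $\bt'$ together with its pseudo-samples $\by_{1,\bt'},\dots,\by_{M,\bt'}$, the acceptance probability over $u \sim \mbox{Uniform}[0,1]$ is exactly $\frac{1}{cM}\sum_{i=1}^{M} K(\eta(\dat) - \eta(\by_{i,\bt'}))$; this is a genuine probability because the hypothesis $c \geq \sup_\eta K(\eta)$ forces the expression into $[0,1]$. Taking expectations over $\bt' \sim \pi$ and $\by_{i,\bt'} \iidsim p(\cdot \mid \bt')$, linearity of expectation together with the fact that the $M$ pseudo-samples are identically distributed collapses the average into a single term:
\[
p_{acc}(R_M) = \frac{1}{cM}\sum_{i=1}^{M} \E\!\left[K(\eta(\dat) - \eta(\by_{i,\bt'}))\right] = \frac{1}{c}\,\E\!\left[K(\eta(\dat) - \eta(\by_{1,\bt'}))\right],
\]
which is visibly independent of $M$. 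In particular $p_{acc}(R_M) = p_{acc}(R_1)$, establishing the first assertion of the lemma.

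Finally I would substitute. In the serial case $C_{f,R_M,\delta}^{\mbox{\tiny ser}} = M v(f)/(\delta\, p_{acc}(R_M)) = M\,[\,v(f)/(\delta\, p_{acc}(R_1))\,] = M\, C_{f,R_1,\delta}^{\mbox{\tiny ser}}$, and the parallel identity follows identically from $C_{f,R_M,\delta}^{\mbox{\tiny M-par}} = v(f)/(\delta\, p_{acc}(R_M))$ and $C_{f,R_1,\delta}^{\mbox{\tiny M-par}} = v(f)/(\delta M\, p_{acc}(R_1))$. I do not expect a genuine obstacle: the entire content is the $M$-invariance of $p_{acc}$, which is a one-line consequence of linearity once the conditional acceptance probability has been identified. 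The only places demanding care are verifying that this conditional probability is legitimately bounded by $1$ (where $c \geq \sup_\eta K$ enters) and recalling that it is the i.i.d.\ character of the accepted samples that makes $v(f)$ independent of $M$.
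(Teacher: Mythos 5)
Your proof is correct and follows essentially the same route as the paper: the paper computes the marginal acceptance probability as an integral over $\bt$ and the $M$ pseudo-samples and collapses it by linearity and exchangeability to $\frac{1}{c}\int \pi(\bt)\,p(\bm{y}\mid\bt)\,K(\eta(\dat)-\eta(\bm{y}))\,d\bt\,d\bm{y}$, which is exactly your expectation computation written in integral form, after which the running-time identities follow by substitution. Your added remarks---that $c \geq \sup_\eta K$ makes the conditional acceptance probability a genuine probability, and that $v(f)$ is $M$-free because the accepted draws are i.i.d.\ from $\pi_K$---are correct details the paper leaves implicit, not a different argument.
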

\begin{proof}
The marginal acceptance probability of Algorithm \ref{abc-gen}
is 
\begin{align*}
&\int \pi(\bt) \left[\prod_{i=1}^M p( \bm{y}_{i,\bt} | \bt) \right]
\left[\frac{1}{cM} \sum_{i=1}^M K(\eta(\dat) -
  \eta(\bm{y}_{i,\bt}))\right] \, d \bt \, d \bm{y}_{1,\bt}
\ldots d \bm{y}_{M,\bt}\\
&= \frac{1}{c} \int \pi(\bt) p(\bm{y} | \bt) K(\eta(\dat) - \eta(\bm{y}))\, d \bt \,
d \bm{y}
\end{align*}
which does not depend on $M$.  The results for the running times
follow immediately.
\end{proof}

Our contribution is to show a similar result for ABC-MCMC. A potential concern regarding Algorithm~\ref{pseudomarg} is raised by \cite{lee2013ergodicity}, who point out that this algorithm is generally not geometrically ergodic when $q$ is a local proposal distribution, such as a random walk proposal. This is due to the fact that in the tails of the distribution $\pi_{K}$, the pseudo-data $y_{i,\bt}$ are very different from $y_{\mathrm{obs}}$ and so the proposed moves are rarely accepted. This problem can be fixed in several ways. Lee and Latuszynski (2013)\nocite{lee2013ergodicity} give a sophisticated solution that involves choosing a random number of pseudo-samples at every step of Algorithm~\ref{pseudomarg}, and they show that this modification increases the class of target distributions for which the ABC-MCMC algorithm is geometrically ergodic. One consequence of our Proposition 4 is that a simpler `obvious' fix to the problem of geometric ergodicity does not work: increasing the number of pseudo-samples used in Algorithm~\ref{pseudomarg} from 1 to any fixed number $M$ has no impact on the geometric ergodicity of the algorithm. 

%Algorithm~\ref{pseudomarg} is a special case of pseudo-marginal MCMC,
%given in Algorithm~\ref{pseudomarg}
%\citep{andrieu2009the-pseudo-marginal}. This
%is used when one cannot evaluate the target density $\mu(x)$ up to a
%normalizing constant, but does have access to an unbiased and
%nonnegative estimator $\hat{\mu}(x)$ for each $x\in \mathcal{X}$.
%Algorithm~\ref{pseudomarg} is a Markov chain on the augmented state
%space $\mathcal{X} \times \mathbb{R}^{+}$ and not necessarily Markov
%in $\mathcal{X}$.  Its stationary distribution has marginal distribution $\mu$ for
%$x$, so under mild regularity conditions the distribution of $x^{(n)}$
%converges to $\mu$ as $n\rightarrow \infty$.  Algorithm~\ref{pseudomarg} is the case with $\mu = \pi_K$ and $\hat{\mu} = \hat{\pi}_{K,M}$.

Our main tool in analyzing Algorithm~\ref{pseudomarg} will be the results
of \cite{Andrieu2014establishing-some-order}. Two random variables $X$
and $Y$ are \textit{convex ordered} if $\E[\phi(X)] \leq \E[\phi(Y)]$
for any convex function $\phi$; we denote this relation by $X
\leq_{cx} Y$.  Let $H_{1}, H_{2}$ be the transition kernels of two
pseudo-marginal algorithms with the same proposal kernel $q$ and the
same target marginal distribution $\mu$. Denote by $T_{1,x}$ and
$T_{2,x}$ the estimators of the unnormalized target used by $H_{1}$
and $H_{2}$ respectively. Recall the asymptotic variance $v(f, H)$
from (\ref{eqn:nav}); although $f$ could be a function on
$\mathcal{X}\times \mathbb{R}^+$, we restrict our attention to
functions $f$ on the non-augmented
state space $\mathcal{X}$.  Then if $T_{1,x} \leq_{cx} T_{2,x}$ for all $x$, Theorem 3 of \cite{Andrieu2014establishing-some-order} shows that $v(f, H_1)
\leq v(f, H_2)$ for all $f \in L^{2}(\mu)$. As shown in Section 6 of that work, this tool can be used to show that increasing $M$ decreases the asymptotic variance of Algorithm~\ref{pseudomarg}: 

\begin{cor}\label{thm:likFree}
For any $f \in L^{2}(\pi_K)$ and any $M \leq N \in \mathbb{N}$ we have $ v(f,
Q_M) \geq v( f, Q_{N}).$
\end{cor}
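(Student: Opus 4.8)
The plan is to reduce the statement to a convex-ordering comparison between the estimators used by $Q_M$ and $Q_N$, and then invoke Theorem 3 of \cite{Andrieu2014establishing-some-order} as quoted above. Both $Q_M$ and $Q_N$ are pseudo-marginal chains built from the same proposal $q$ and targeting the same marginal $\pi_K$, with unnormalized estimators $T_{M,x} = \hat{\pi}_{K,M}(\bt \mid \dat)$ and $T_{N,x} = \hat{\pi}_{K,N}(\bt \mid \dat)$, so the hypotheses of that theorem are in force once I establish the ordering $T_{N,x} \leq_{cx} T_{M,x}$ for every fixed $x = \bt$. Since $N \geq M$ should give the \emph{smaller} asymptotic variance, I want the larger-$M$ estimator to be convex-dominated by the smaller-$M$ one; the conclusion $v(f,Q_N) \leq v(f,Q_M)$ then follows immediately.

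Fixing $\bt$, write $W_i \equiv K(\eta(\dat) - \eta(\by_{i,\bt}))$, which are i.i.d.\ (conditionally on $\bt$), and note $T_{M,x} = \pi(\bt)\,\overline W_M$ with $\overline W_M = \frac1M\sum_{i=1}^M W_i$. The heart of the argument is the standard fact that sample means of i.i.d.\ variables are decreasing in convex order: $\overline W_N \leq_{cx} \overline W_M$ whenever $M \leq N$. I would prove this by a symmetrization argument. Draw a subset $S \subseteq \{1,\dots,N\}$ of size $M$ uniformly at random and independently of the $W_i$, and set $\overline W_S = \frac1M\sum_{i\in S} W_i$. Exchangeability gives $\overline W_S \stackrel{d}{=} \overline W_M$, while averaging over $S$ yields
\[
  \E\!\left[\overline W_S \,\middle|\, W_1,\dots,W_N\right] = \overline W_N .
\]
Conditional Jensen applied to any convex $\phi$ then gives $\E[\phi(\overline W_N)] = \E[\phi(\E[\overline W_S \mid W_1,\dots,W_N])] \leq \E[\phi(\overline W_S)] = \E[\phi(\overline W_M)]$, which is exactly $\overline W_N \leq_{cx} \overline W_M$.

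It remains to pass from $\overline W_N \leq_{cx} \overline W_M$ to $T_{N,x}\leq_{cx} T_{M,x}$, i.e.\ to multiply by the fixed nonnegative constant $\pi(\bt)$. This is immediate because $t \mapsto \phi(\pi(\bt)\,t)$ is convex whenever $\phi$ is and $\pi(\bt)\geq 0$, so convex order is preserved under scaling by a nonnegative constant. With $T_{N,x}\leq_{cx} T_{M,x}$ in hand for every $x$, Theorem 3 of \cite{Andrieu2014establishing-some-order} delivers $v(f,Q_N)\leq v(f,Q_M)$ for all $f\in L^2(\pi_K)$, as claimed. I expect the only genuinely substantive step to be the convex ordering of the sample means; the symmetrization and conditional-Jensen device above is the natural route, and everything surrounding it (identifying the estimators, checking the shared target and proposal, and the scaling by $\pi(\bt)$) is routine verification.
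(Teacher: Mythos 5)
Your proof is correct and follows essentially the same route as the paper: the paper treats this as an immediate consequence of Theorem 3 of \cite{Andrieu2014establishing-some-order}, deferring the convex-ordering of the averaged estimators to Section 6 of that work rather than proving it. Your symmetrization/conditional-Jensen argument (random size-$M$ subset, exchangeability, then scaling by $\pi(\bt)$) is exactly the standard way to fill in that cited step, and the direction of the ordering and the application of Theorem 3 are both handled correctly.
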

%\begin{proof}
%Let $\hat{\pi}_{N} = \pi(\bt){1 \over
% N} \sum_{i=1}^N K\left(\eta(\dat) - \eta(\by_{i,\bt})\right)$ be the estimator of $\pi_K(\bt|\dat)$ used by $Q_{N}$, let $\sigma \in S_{N}$ be a permutation chosen at random, and let $\hat{\pi}_{M} = \pi(\bt){1 \over
% M} \sum_{i=1}^{M} K\left(\eta(\dat) - \eta(\by_{\sigma(i),\bt})\right)$ be the %estimator of $\pi_K(\bt|\dat)$ used by $Q_{M}$. Note that $\hat{\pi}_{M}, \hat{\pi}_{N}$ are not independent, and $\E[\hat{\pi}_{M} | \hat{\pi}_{N}] = \hat{\pi}_{N}$. By Proposition 2.2 of \cite{Lekela2014conditional-convex}, this implies $\hat{\pi}_{N} \leq_{cx} \hat{\pi}_{M}$. The result follows immediately from Theorem 3 of \cite{Andrieu2014establishing-some-order}.
%\end{proof}

\noindent In the appendix we give a similar comparison result for the
alternative version of ABC-MCMC described in \cite{wilk:08}.

We will show that, despite Corollary~\ref{thm:likFree}, it is not generally an advantage to use a large value of $M$ in Algorithm~\ref{pseudomarg}. To do this, we first give a result that follows easily from Theorem~3 of
\cite{Andrieu2014establishing-some-order}.  For any $ 0 \leq \alpha <
1$ and $i \in \{1,2\}$, define the
estimator
\be \label{eqn:handicap} T_{i,x,\alpha} = \begin{cases} 0
&\mbox{with\, probability}\, \alpha \\ 
T_{i,x}/(1 - \alpha) & \mbox{otherwise.} \end{cases} \ee
$T_{i,x,\alpha}$ has the same mean as $T_{i,x}$, but is a worse estimator. In particular, Proposition 1.2 of \cite{Lekela2014conditional-convex} implies $T_{i,x} \leq_{cx} T_{i,x,\alpha}$ for any $0 \leq \alpha \leq 1$ and any $i \in \{1,2\}$ (Equation \eqref{eqn:handicap} gives the coupling required by that proposition). We have:

\begin{thm} \label{thm:R1RM}
Assume that $H_2$ is nonnegative definite.  For any $ 0 \leq
\alpha < 1$, if $T_{1,x}\leq_{cx} T_{2,x,\alpha}$ for all $x$, then for any $f \in
L^{2}(\mu)$ we have \be v(f, H_1) \leq \frac{1+\alpha}{1-\alpha} v(
f, H_2).  \ee
\end{thm}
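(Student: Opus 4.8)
The plan is to interpolate through the handicapped chain and then recognize the handicap as ordinary laziness. Let $H_{2,\alpha}$ denote the pseudo-marginal algorithm that uses the handicapped estimator $T_{2,x,\alpha}$ of \eqref{eqn:handicap}, run with the same proposal $q$ and same target $\mu$ as $H_2$ (legitimate, since $T_{2,x,\alpha}$ has the same mean $\mu(x)$ as $T_{2,x}$). By hypothesis $T_{1,x}\leq_{cx}T_{2,x,\alpha}$ for all $x$, so Theorem~3 of \cite{Andrieu2014establishing-some-order} immediately yields $v(f,H_1)\leq v(f,H_{2,\alpha})$ for every $f\in L^2(\mu)$. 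It therefore suffices to establish the single-chain bound $v(f,H_{2,\alpha})\leq \frac{1+\alpha}{1-\alpha}\,v(f,H_2)$.

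The crux is that $H_{2,\alpha}$ is merely a lazy version of $H_2$. On the augmented space, the stationary law of a pseudo-marginal chain with weight $T_x$ is $\bar\mu(dx,dt)\propto t\,\mu(dx)\,Q_x(dt)$, where $Q_x$ is the law of $T_x$. For the handicapped weight, the atom of mass $\alpha$ at $0$ is annihilated by the weighting by $t$, and the rescaling $t\mapsto t/(1-\alpha)$ is exactly undone by the change of variables, so the map $\Phi(x,t)=(x,\,t/(1-\alpha))$ carries the stationary law of $H_2$ onto that of $H_{2,\alpha}$. Under this identification, a transition of $H_{2,\alpha}$ from $(x,t/(1-\alpha))$ does the following: with probability $\alpha$ the fresh weight at the proposal is $0$, forcing an immediate rejection; with probability $1-\alpha$ it equals $t'/(1-\alpha)$ with $t'\sim Q_{x'}$, and the common factor $1/(1-\alpha)$ cancels in the Metropolis ratio $\min\bigl\{1,\frac{(t'/(1-\alpha))\,q(x|x')}{(t/(1-\alpha))\,q(x'|x)}\bigr\}$, so the accept/reject step is exactly that of $H_2$. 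Hence, as self-adjoint operators intertwined by $\Phi$, $H_{2,\alpha}=\alpha I+(1-\alpha)H_2$; and because $f$ depends only on $x$ (which $\Phi$ preserves), $v(f,H_{2,\alpha})=v(f,\,\alpha I+(1-\alpha)H_2)$.

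It remains to bound the asymptotic variance of the lazy chain $P_\alpha\equiv\alpha I+(1-\alpha)H_2$. Here I would invoke the spectral representation of asymptotic variance for reversible chains: for centred $f$, $v(f,H)=\int \frac{1+\lambda}{1-\lambda}\,E_f(d\lambda)$, where $E_f$ is the spectral measure of $f$ with respect to $H$ (total mass $\mathrm{Var}(f)$). Since $P_\alpha=g(H_2)$ with $g(\lambda)=\alpha+(1-\alpha)\lambda$, the functional calculus for the self-adjoint operator $H_2$ gives $v(f,P_\alpha)=\int \frac{1+g(\lambda)}{1-g(\lambda)}\,E_f(d\lambda)$ with $E_f$ now taken relative to $H_2$. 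The bound then follows by integrating the pointwise inequality $\frac{1+g(\lambda)}{1-g(\lambda)}\leq \frac{1+\alpha}{1-\alpha}\cdot\frac{1+\lambda}{1-\lambda}$ against the nonnegative measure $E_f$ (which also covers the case where both sides are infinite); after clearing the positive denominators this inequality reduces to $(1-\alpha)\lambda\leq(1+\alpha)\lambda$, i.e.\ to $0\leq 2\alpha\lambda$. This last step is precisely where the hypothesis that $H_2$ is nonnegative definite enters: it guarantees that $E_f$ is supported on $\{\lambda\geq 0\}$, exactly the region where the pointwise inequality holds; for $\lambda<0$ it would fail.

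The main obstacle I anticipate is the identification in the second paragraph: one must verify simultaneously that the atom at $0$ forces rejection, that the factor $1/(1-\alpha)$ cancels in the acceptance ratio, and that the same rescaling $\Phi$ relates the two augmented stationary laws, so that no spurious Jacobian or normalising constant survives and $f$ is genuinely unchanged by $\Phi$. Once this operator identity $H_{2,\alpha}=\alpha I+(1-\alpha)H_2$ is secured, the remaining steps are routine: the first is a direct application of the convex-ordering theorem, and the third is a one-line spectral estimate.
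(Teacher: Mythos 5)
Your proposal is correct, and its skeleton is the same as the paper's: interpolate through the handicapped kernel $H_{2,\alpha}$, apply Theorem~3 of \cite{Andrieu2014establishing-some-order} to get $v(f,H_1)\leq v(f,H_{2,\alpha})$, identify $H_{2,\alpha}$ (as far as the $x$-coordinate is concerned) with the lazy chain $\alpha I+(1-\alpha)H_2$, and then bound the lazy chain's asymptotic variance by $\frac{1+\alpha}{1-\alpha}v(f,H_2)$ using nonnegative definiteness. The one place you genuinely diverge is the final step: the paper gets $v(f,H_{2,\alpha})\leq \frac{1}{1-\alpha}v(f,H_2)+\frac{\alpha}{1-\alpha}v(f)$ by citing Corollary~1 of \cite{latuszynski2013clts} and then uses the nonnegative spectrum only to bound $v(f)\leq v(f,H_2)$, whereas you prove the bound from scratch via the Kipnis--Varadhan spectral representation, pushing the spectral measure forward under $g(\lambda)=\alpha+(1-\alpha)\lambda$ and integrating the pointwise inequality $\frac{1+g(\lambda)}{1-g(\lambda)}\leq \frac{1+\alpha}{1-\alpha}\cdot\frac{1+\lambda}{1-\lambda}$, which holds precisely on $\lambda\geq 0$. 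Your computation is correct (it in fact recovers the cited corollary as an identity, $v(f,P_\alpha)=\frac{1}{1-\alpha}v(f,H_2)+\frac{\alpha}{1-\alpha}v(f)$, before weakening it), and the nonnegative-definiteness hypothesis enters in both arguments for the same underlying reason. A second, smaller difference: your intertwining map $\Phi(x,t)=(x,t/(1-\alpha))$ spells out the lazy-chain identification---annihilation of the atom at $0$ by the size-biased stationary law, cancellation of $1/(1-\alpha)$ in the acceptance ratio---which the paper asserts in one sentence without proof. So your write-up is more self-contained at both steps, at the cost of invoking the spectral calculus for reversible kernels rather than two citations.
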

\noindent Theorem~\ref{thm:R1RM} is proven in the appendix. The assumption that
$H_2$ is nonnegative definite is a common technical assumption
in analyses of the efficiency of Markov chains
\citep{wood:schm:hube:09a,nara:rakh:10}, and is done here so we can compare $v(f,H_{2})$ to $v(f)$. It can easily be achieved, for example, by incorporating a ``holding probability'' (chance of proposing to stay in the same location) of
1/2 into the proposal kernel $q$ \citep{wood:schm:hube:09a}. 

Theorem~\ref{thm:R1RM} yields the following bound for Algorithm~\ref{pseudomarg}, proven in the appendix.

\begin{prop} \label{thm:unifcor}
If $K(\eta) = \textbf{1}_{\{\| \eta \| < \epsilon\}}$ for some
$\epsilon>0$, and if the transition kernel $Q_M$ of
Algorithm~\ref{pseudomarg} is nonnegative definite, then for any $f \in
L^2(\pi_K)$ we have \be v(f, Q_1)
\leq (2 M - 1) v(f, Q_M). \label{eqn:firstBound}\ee
\end{prop}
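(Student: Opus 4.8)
The plan is to apply Theorem~\ref{thm:R1RM} with $H_1 = Q_1$ and $H_2 = Q_M$. The proposition already assumes $Q_M$ is nonnegative definite, so the only hypothesis of Theorem~\ref{thm:R1RM} that remains to be checked is the convex ordering $T_{1,x} \leq_{cx} T_{2,x,\alpha}$ for a suitable $\alpha$; here $T_{1,x}$ denotes the single-pseudo-sample estimator driving $Q_1$ and $T_{2,x}$ the $M$-pseudo-sample estimator driving $Q_M$ (note the subscripts $1,2$ index the two kernels, not the number of samples). To land on the stated constant, I would choose $\alpha$ so that $\frac{1+\alpha}{1-\alpha} = 2M-1$, that is $\alpha = 1 - \frac{1}{M}$, and then verify that this is exactly enough handicap to reverse the convex ordering of Corollary~\ref{thm:likFree}.

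First I would make the estimators explicit, since this is where the uniform-kernel assumption enters. With $K(\eta) = \mathbf{1}_{\{\|\eta\|<\epsilon\}}$, each term $K(\eta(\dat) - \eta(\by_{i,\bt}))$ is a Bernoulli random variable with success probability $p(\bt) \equiv \Pr(\|\eta(\dat) - \eta(\by_{\bt})\| < \epsilon)$. Writing $x = \bt$, this gives $T_{1,x} = \pi(\bt)\,B$ with $B \sim \mathrm{Bernoulli}(p(\bt))$, and $T_{2,x} = \pi(\bt)\,\frac{1}{M}S$ with $S \sim \mathrm{Binomial}(M,p(\bt))$, both of mean $\pi(\bt)p(\bt)$. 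With $\alpha = 1 - 1/M$ the handicapped estimator $T_{2,x,\alpha}$ of \eqref{eqn:handicap} equals $0$ with probability $1-1/M$ and equals $\pi(\bt)\,S$ with probability $1/M$. Since multiplying by the positive constant $\pi(\bt)$ preserves convex order, the required ordering reduces to $B \leq_{cx} V$, where $V$ places mass $1-1/M$ at $0$ and mass $1/M$ on $\mathrm{Binomial}(M,p(\bt))$.

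The heart of the argument, and the step I expect to be the main obstacle, is verifying $B \leq_{cx} V$. I would use the stop-loss characterization: for integrable real random variables with equal means, $B \leq_{cx} V$ holds if and only if $\E[(B-c)_+] \le \E[(V-c)_+]$ for every $c \in \mathbb{R}$, where $(z)_+ = \max(z,0)$. The means agree by construction, and the cases $c \le 0$ and $c \ge 1$ are immediate (the two stop-loss transforms coincide on the former range and the left side vanishes on the latter). The only nontrivial range is $0 < c < 1$, where $\E[(B-c)_+] = p(\bt)(1-c)$ and a short binomial computation gives $\E[(V-c)_+] = p(\bt) - \frac{c}{M}\bigl(1 - (1-p(\bt))^M\bigr)$. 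The desired inequality then collapses to $M p(\bt) \ge 1 - (1-p(\bt))^M$, which is exactly Bernoulli's inequality (equivalently $\E[S] \ge \Pr(S \ge 1)$), and holds for every $\bt$ since $p(\bt) \in [0,1]$. This establishes $T_{1,x} \leq_{cx} T_{2,x,\alpha}$ for all $x$.

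With the convex ordering in hand, Theorem~\ref{thm:R1RM} applied with $\alpha = 1-1/M$ gives $v(f,Q_1) \le \frac{1+\alpha}{1-\alpha}\,v(f,Q_M) = (2M-1)\,v(f,Q_M)$ for all $f \in L^2(\pi_K)$, which is precisely \eqref{eqn:firstBound}. The only loose end to handle is the degenerate boundary, namely $p(\bt) \in \{0,1\}$ or $\pi(\bt) = 0$, where the relevant laws become point masses; there the convex ordering is trivial, so it does not affect the conclusion.
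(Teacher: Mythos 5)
Your proof is correct and follows essentially the same route as the paper's: apply Theorem~\ref{thm:R1RM} with $H_1 = Q_1$, $H_2 = Q_M$, and $\alpha = 1 - 1/M$, represent the normalized estimators as $\mathrm{Bin}(1,\tau)$ versus the mixture $\frac{M-1}{M}\delta_0 + \frac{1}{M}\mathrm{Bin}(M,\tau)$, and reduce the convex ordering to Bernoulli's inequality $1-(1-\tau)^M \leq M\tau$, exactly as in the paper's key calculation \eqref{IneqMainCalcForProp}. The one genuine difference is in how you certify the convex order: you use the stop-loss criterion $\E[(X-c)_+] \leq \E[(Y-c)_+]$ (valid here since the means agree), under which the case $c \geq 1$ is trivial because the left side vanishes, whereas the paper uses the absolute-moment criterion $\E|X-c| \leq \E|Y-c|$ and therefore needs the separate piecewise-linear/derivative argument \eqref{IneqSimpleC}--\eqref{IneqDerC} to cover $1 < c < M$; your variant buys a shorter case analysis at no cost in generality.
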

%
%\noindent We note that the existence of such an $a > 0$ is common whenever the state space of the chain is compact, the chain has no absorbing state and all likelihoods and distributions involved in the integral are sufficiently regular.  

This yields the following bounds on the running times:

\begin{cor}\label{cor:M1best}
Using the uniform kernel and assuming that $Q_M$ is nonnegative
definite, the running time of
$Q_1$ is at most twice that of $Q_M$, for both serial
and parallel computing:
\begin{align*}
C_{f,Q_1,\delta}^{\mbox{\tiny ser}} \leq
2C_{f,Q_M,\delta}^{\mbox{\tiny ser}} \qquad \mbox{and}
\qquad C_{f,Q_1,\delta}^{\mbox{\tiny M-par}} \leq
2C_{f,Q_M,\delta}^{\mbox{\tiny M-par}}.
\end{align*}

\end{cor}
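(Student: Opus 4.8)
The plan is to obtain both inequalities as immediate consequences of the variance bound of Proposition~\ref{thm:unifcor}, namely $v(f,Q_1) \leq (2M-1)\,v(f,Q_M)$, by substituting the definitions of the serial and parallel running times and then discarding the $-1$ via the elementary estimate $2M-1 \leq 2M$. All of the probabilistic content has already been established in the proposition, so the corollary amounts to matching each running-time quantity to its definition and performing a short computation. The nonnegative-definiteness hypothesis on $Q_M$ enters only through the appeal to Proposition~\ref{thm:unifcor} and has no further role.

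First I would treat the serial case. Recalling the definition $C_{f,Q_M,\delta}^{\mbox{\tiny ser}} \equiv M\,v(f,Q_M)/\delta$, the choice $M=1$ (one pseudo-sample per iteration) gives $C_{f,Q_1,\delta}^{\mbox{\tiny ser}} = v(f,Q_1)/\delta$. Chaining Proposition~\ref{thm:unifcor} with $2M-1 \leq 2M$ then yields
\[
C_{f,Q_1,\delta}^{\mbox{\tiny ser}} = \frac{v(f,Q_1)}{\delta} \leq \frac{(2M-1)\,v(f,Q_M)}{\delta} \leq \frac{2M\,v(f,Q_M)}{\delta} = 2\,C_{f,Q_M,\delta}^{\mbox{\tiny ser}},
\]
which is the first claimed bound.

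Next I would treat the parallel case in exactly the same way, being careful that method (a) gives $C_{f,Q_M,\delta}^{\mbox{\tiny M-par}} \equiv v(f,Q_M)/\delta$ while method (b) gives $C_{f,Q_1,\delta}^{\mbox{\tiny M-par}} \equiv v(f,Q_1)/(\delta M)$, so that the factor $1/M$ now sits on the $Q_1$ side. Applying Proposition~\ref{thm:unifcor} once more,
\[
C_{f,Q_1,\delta}^{\mbox{\tiny M-par}} = \frac{v(f,Q_1)}{\delta M} \leq \frac{(2M-1)\,v(f,Q_M)}{\delta M} \leq \frac{2M\,v(f,Q_M)}{\delta M} = 2\,C_{f,Q_M,\delta}^{\mbox{\tiny M-par}}.
\]
There is no genuine obstacle here, since the difficulty was entirely absorbed into Proposition~\ref{thm:unifcor}; the only point that demands attention is the bookkeeping of where the factor $M$ appears in each of the four running-time definitions (one pseudo-sample per step for $Q_1$ in the serial setting, versus $M$ independent $Q_1$-chains sharing the $1/M$ factor in the parallel setting), which is what makes both comparisons collapse to the same underlying inequality $v(f,Q_1)\leq 2M\,v(f,Q_M)$.
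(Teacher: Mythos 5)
Your proof is correct and is exactly the argument the paper intends: the corollary is stated as an immediate consequence of Proposition~\ref{thm:unifcor}, obtained by plugging $v(f,Q_1)\leq(2M-1)v(f,Q_M)\leq 2M\,v(f,Q_M)$ into the definitions $C_{f,Q_M,\delta}^{\mbox{\tiny ser}}=Mv(f,Q_M)/\delta$, $C_{f,Q_M,\delta}^{\mbox{\tiny M-par}}=v(f,Q_M)/\delta$, and $C_{f,Q_1,\delta}^{\mbox{\tiny M-par}}=v(f,Q_1)/(\delta M)$. Your bookkeeping of where the factor $M$ sits in each running-time definition matches the paper's conventions, so there is nothing to add.
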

\noindent Corollary~\ref{cor:M1best} implies that, under the condition that a uniform kernel is being used and that all pseudosamples have the same computational cost, it is only possible to improve the running time of the Markov chain by a factor of 2 by choosing $Q_M$ rather than $Q_1$.  Thus, under these reasonable conditions, there is never a strong reason to use $Q_{M}$ over $Q_{1}$, and in fact there can be a strong reason to use $Q_{1}$ over $Q_{M}$.

\subsection{Simulation study}
We now demonstrate these results through a simple simulation study,
showing that choosing $M>1$ is seldom beneficial. We consider the
model $\by | \bt \sim \mathcal{N}(\bt, \sigma_y^2)$ for a single
observation $\by$, where $\sigma_y$ is known and $\bt$ is given a
standard normal prior, $\bt \sim \mathcal{N}(0, 1)$.
We apply Algorithm~\ref{pseudomarg} using proposal distribution $q(\bt |
\bt^{(t-1)}) = \mathcal{N}(\bt; 0, 1)$, summary statistic $\eta(\by) = \by$, and $K(\eta) =
\mathbf{1}_{\{\|\eta\| < \epsilon\}}$ equal to the uniform kernel with
bandwidth $\epsilon$, and subsequently study the algorithm's acceptance rate, which due to the independent proposal is analogous to effective sample size. %Each iteration of this algorithm involves
%simulating $\bt'$ from $q(\cdot |
%\bt^{(t-1)})$, generating
%pseudo-samples $\by_{1,\bt'}, \dots, \by_{M,\bt'} | \bt' \sim
%\mathcal{N}(\bt', \sigma_y^2)$, and accepting or rejecting the move
%based on the approximated likelihood
%\begin{align}\label{Eqn:LikeApprox}
% \frac{1}{M} \sum_{i=1}^M \mathbf{1}_{\{\|\dat -
%\by_{i,\bt'}\|<\epsilon\}}.
%\end{align}
We start by exploring the case where $\dat=2$ and $\sigma_y=1$,
simulating the Markov chain for $5$ million iterations. Figure
\ref{fig:1} (left) shows the acceptance rate per generated pseudo-sample as a
function of $M$. Large $M$ does not provide a
benefit in terms of accepted $\bt$ samples per generated
pseudo-sample, and can even decrease this measure of efficiency, supporting the result of
Corollary~\ref{cor:M1best}. In fact, replacing the uniform kernel with a Gaussian kernel (not shown) leads to indistinguishable results.
In Figure \ref{fig:1}
(right) we look at the $\epsilon$ which results from requiring that a fixed
percentage ($0.4\%$) of $\bt$ samples are accepted per unit
computational cost.  This
means that for $M=1$ we require that $0.4\%$ of samples are accepted,
while for $M=64$ we require that $0.4\times 64 = 25.6\%$ of samples
are accepted.  

In certain cases, there is an initial fixed computational cost to
generating pseudo-samples, after which generating subsequent
pseudo-samples is computationally inexpensive. If $\by$ is a
length-$n$ sample from a Gaussian process, for instance, there is an
initial $O(n^3)$ cost to decomposing the covariance, after which each
pseudo-sample may be generated at $O(n^2)$ cost. In the case with discount factor $\delta>1$
(representing a $\delta \times$ cost reduction for all pseudo-samples
after the first), the pseudo-sampling cost is $(1+(M-1)/\delta)$, so
we require that $0.4\times (1+(M-1)/\delta)\%$ of $\bt$ samples are
accepted. For example, a discount of $\delta =16$ with $M=64$ requires
that $2.0\%$ of samples are accepted.  Figure \ref{fig:1} shows that,
for $\delta=1$, larger $M$ results in larger $\epsilon$; in other
words, for a fixed computational budget and no discount, $M=1$ gives
the smallest $\epsilon$. For discounts $\delta>1$, however, increasing $M$
can indeed lead to reduced $\epsilon$.
\begin{figure}[h]%
\centering %\subfigure[\label{fig:1}]
{\includegraphics[width=.4\textwidth]{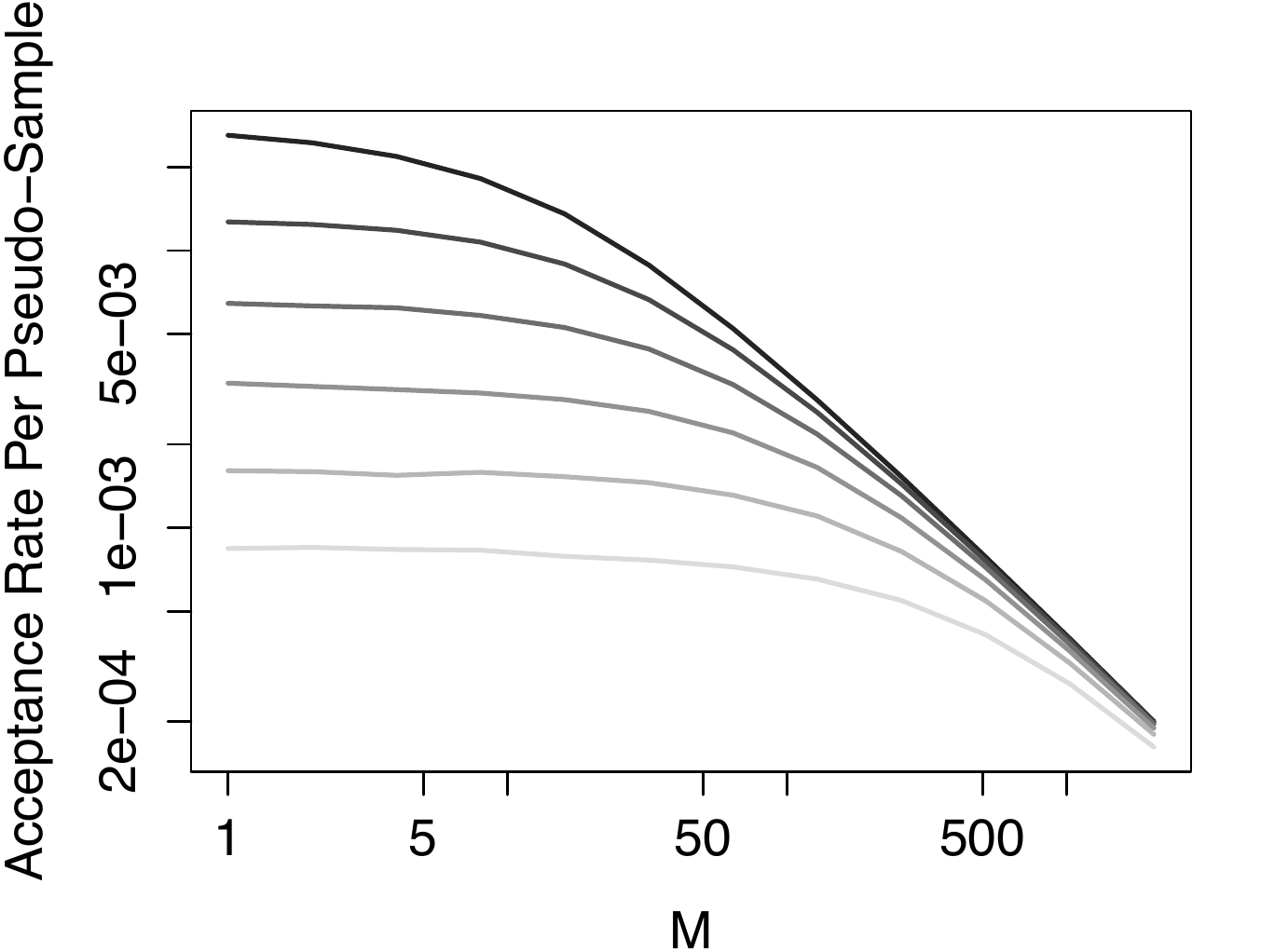}}\qquad
%\subfigure[\label{fig:2}]
{\includegraphics[width=.4\textwidth]{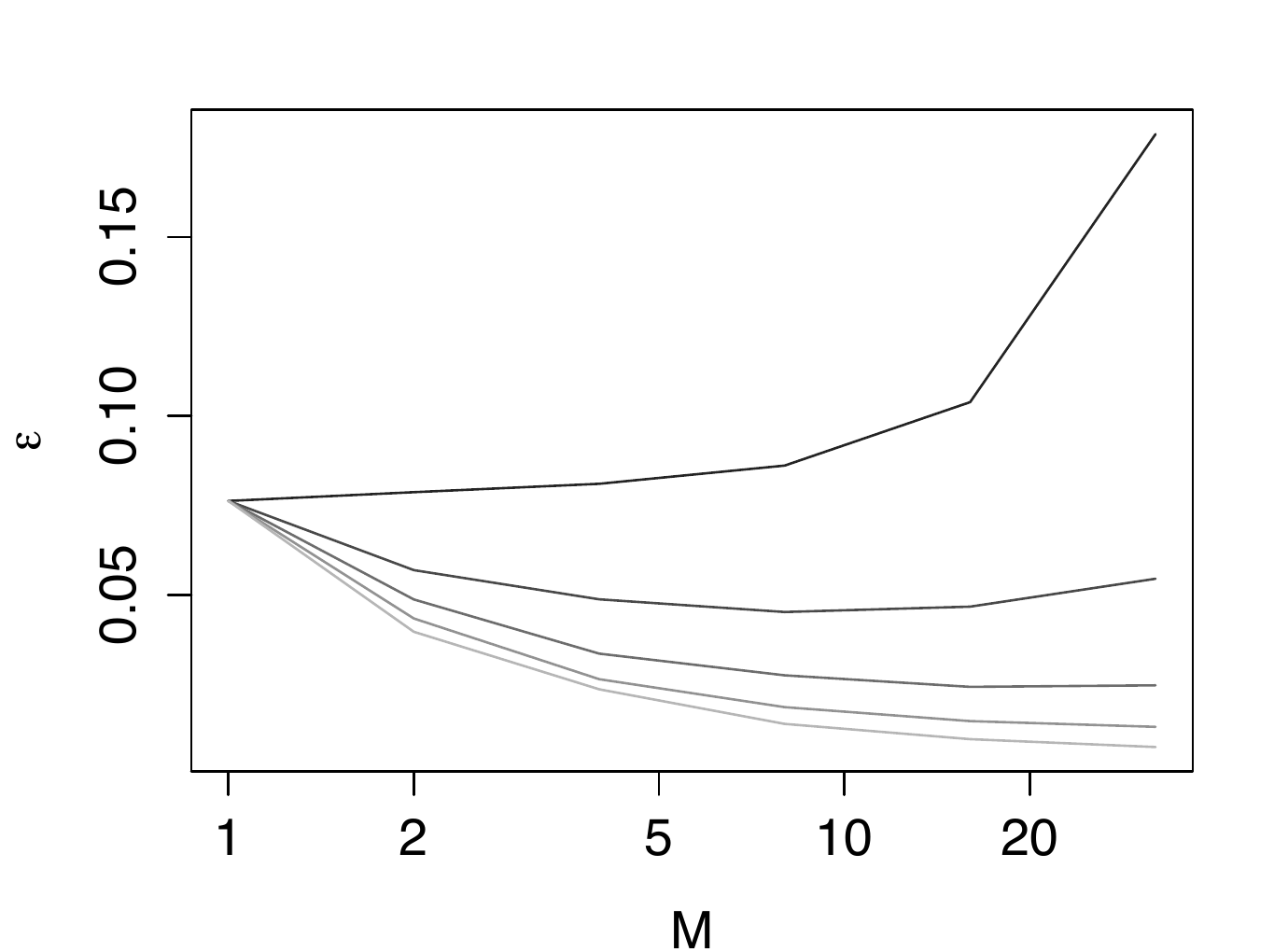}}
\protect\caption{\label{fig:1} Left: Acceptance rate per pseudo-sample. Lines correspond to
$\epsilon=0.5^2$ (black) through $\epsilon=0.5^6$
(light grey), spaced uniformly on log scale.  Right: The $\epsilon$
resulting from requiring that $0.4\%$ of $\bt$ samples are accepted
per unit computational cost.  Lines correspond
to different computational savings for pseudo-samples beyond
the first, in the range $\delta = 1$ (black, no cost savings), $2, 4, 8, 
16$ (light grey).}
\end{figure}

We further explore discounting in Figure \ref{fig:3},
which uses a discount of $\delta = 8$ and varies $\dat$ from $2$ to $8$
(left plot) and $\sigma_y$ from $0.01$ to $2$ (right plot). In both cases the changes are meant to induce a
divergence between the prior and the likelihood, and hence the prior
and posterior. In these figures, the requisite $\epsilon$ are scaled
such that at $M=1$ all normalized $\epsilon$ are $1$.
Figure~\ref{fig:3} (left) shows that as $\dat$ grows, the benefit associated
with using higher values of $M$ shrinks and eventually disappears.
This is because for large $\dat$ a large value of $\epsilon$ is
required in order to frequently get a nonzero value for the
approximated likelihood and thus a reasonable
acceptance rate; for instance, the unnormalized value of $\epsilon$ is
$0.08$ when $\dat=2$ and $M=1$, while $\epsilon=7.64$ when $\dat=8$
and $M=1$. As such, the increased diversity from multiple samples is
dwarfed by the scale of $\epsilon$. In Figure \ref{fig:3} (right) we examine sensitivity of our conclusions to
$\sigma_y$. For large $\sigma_y$, additional (discounted-cost)
pseudo-samples provide a benefit, because they improve the accuracy of
the approximated likelihood. However, for small
values of $\sigma_y$, the variability of the pseudo-samples
$y_{i,\bt'}$ is low and so additional pseudo-samples do not provide
much incremental improvement to the likelihood approximation.  
\begin{figure}[h]%
\centering
\includegraphics[width=.45\textwidth]{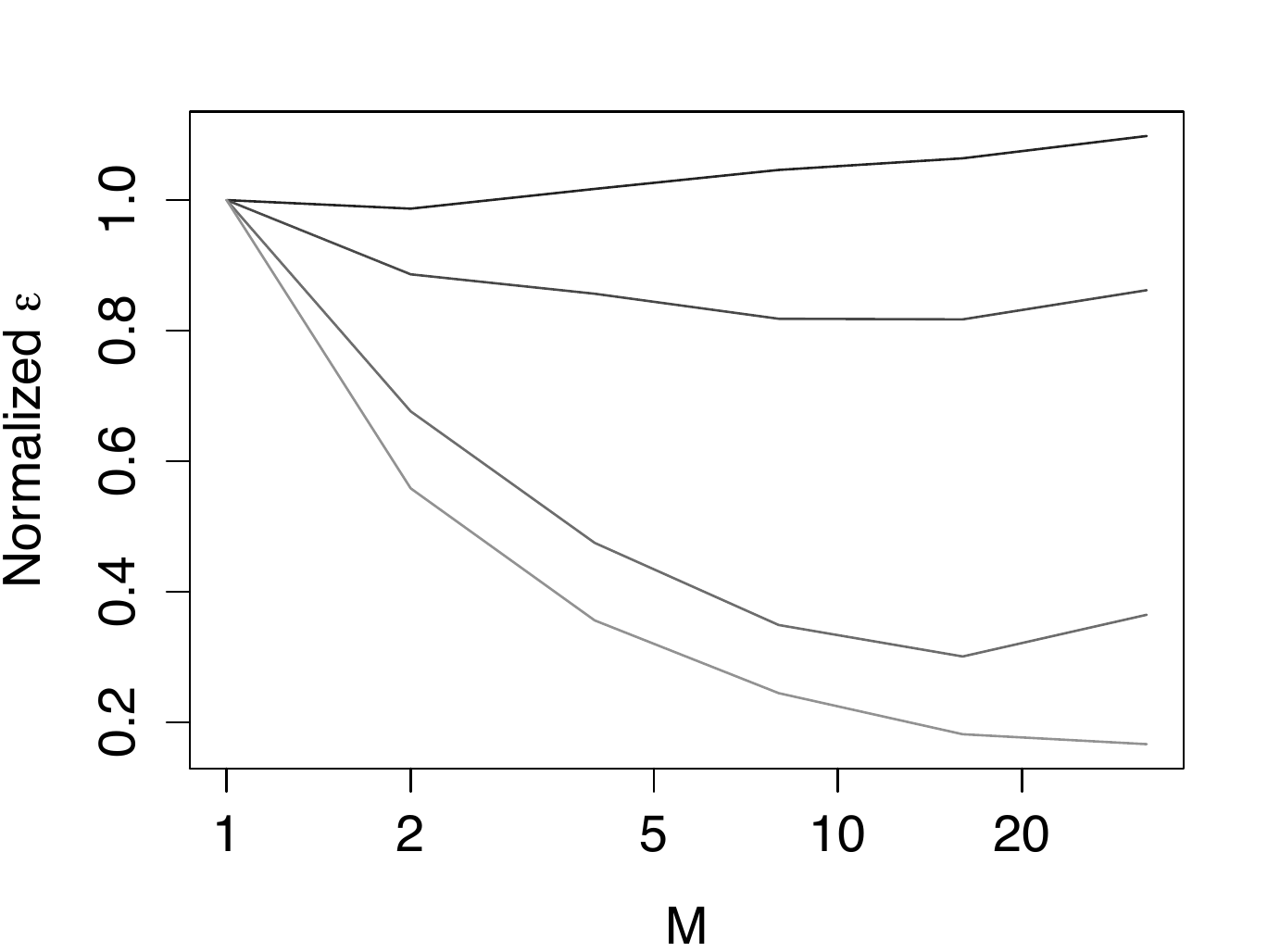}\qquad
\includegraphics[width=.45\textwidth]{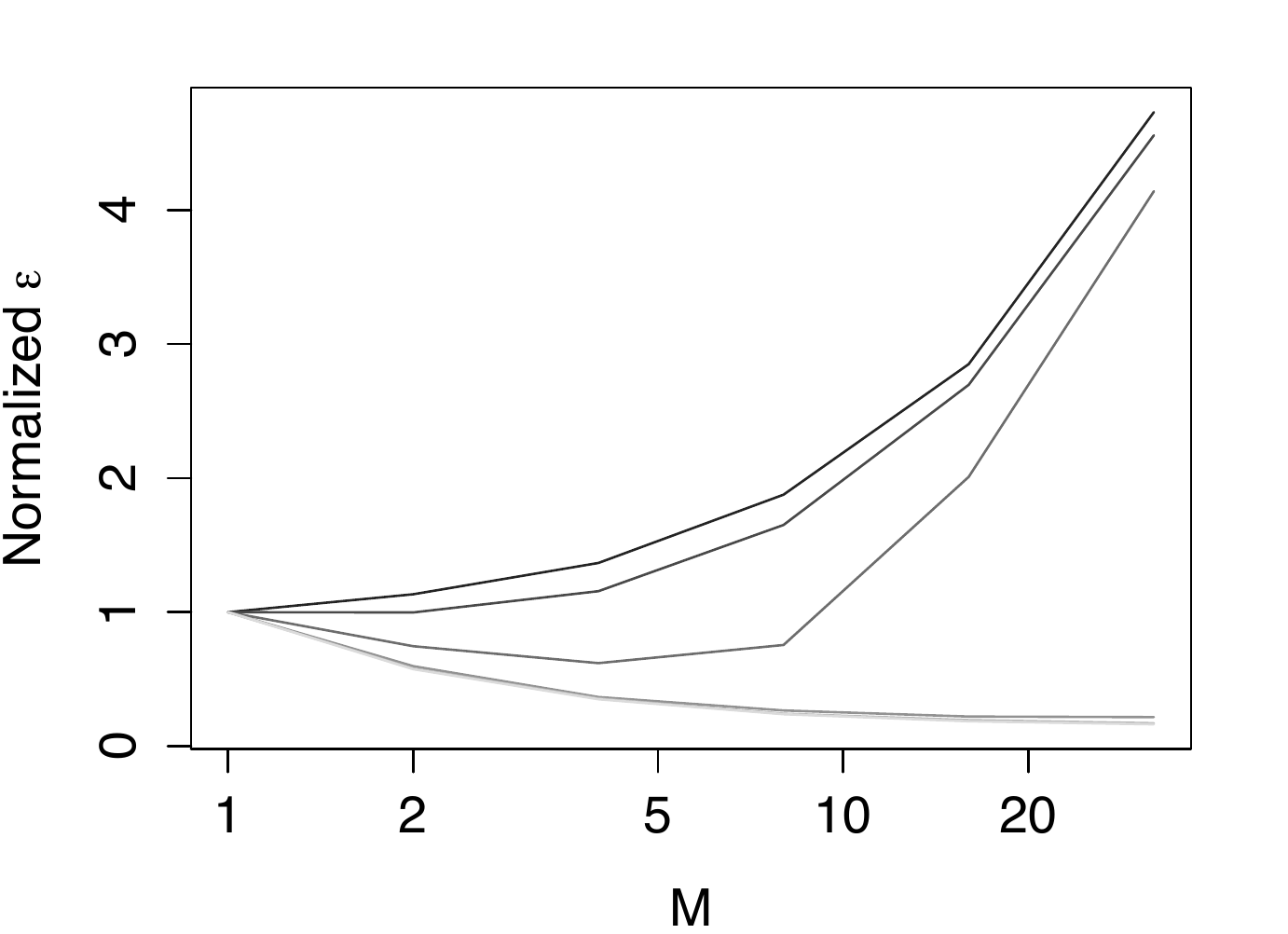}\\
\caption{\protect Sensitivity of the relationship between the bias and $M$ to
  changes in the likelihood. Each curve is normalized by dividing by its
  respective $\epsilon$ for $M=1$. Left: Varying $\dat = 2$ (light grey), $4,
  6, 8$ (black). Right: Varying $\sigma_y = 0.01$
  (black), $0.05, 0.1, 0.5, 1, 2$ (light grey).  \label{fig:3}}
\end{figure}
In summary, we only find a benefit of increasing the number of
pseudo-samples $M$ in cases where there is a discounted cost to obtain
those pseudo-samples, and even then the benefit can be decreased or
eliminated when $\dat$ is extreme under typical proposed values of
$\bt$, or when the variability of $\by$ under the model is low.

\section{Discussion}

In this paper, we have shown that despite the true likelihood leading to reduced asymptotic variance relative to the approximated likelihood constructed through ABC, in practice one should stick with simple, single pseudo-sample approximations rather than trying to accurately approximate the true likelihood with multiple pseudo-samples. Our results are obtained by bounding the asymptotic variance of Markov chain Monte Carlo estimates, which takes into account the autocorrelation of the Markov chain.  This is not to say that $M=1$ is optimal in all situations. In many cases, there is a large initial cost to the first pseudo-sample, with subsequent samples drawn at a much reduced computational cost. In this case $M>1$ can lead to improved performance.

We hope that this work not only provides practical guidance on the choice of the number of pseudo-samples when using ABC, but also that it might lead to future research in the analysis of these algorithms. As a specific example, it would be fruitful to pursue the results in this paper extended to non-uniform kernels such as the popular Gaussian kernel. The main difficulty in doing so is extending the explicit calculation \eqref{IneqMainCalcForProp} in the proof of Proposition \ref{thm:unifcor}. Although we have found it possible to obtain similar bounds for specific kernels, target distributions and proposal distributions via grid search on a computer, and simulations suggest that our conclusions hold in greater generality, we have not been able to extend this inequality to general kernels and general target distributions simultaneously.

\section*{Acknowledgement}
The authors thank Alex Thiery for his careful reading of an earlier draft, as well as Pierre Jacob, R\'{e}mi Bardenet, Christophe Andrieu, Matti Vihola, Christian Robert, and Arnaud Doucet for useful discussions. This research was supported in part by U.S.\ National Science Foundation grants 1461435, DMS-1209103, and DMS-1406599, by DARPA under Grant No. FA8750-14-2-0117, by ARO under Grant No. W911NF-15-1-0172, and by NSERC.

\appendix
\section{Proofs}

\begin{proof}[Proof of Theorem~\ref{thm:R1RM}]

%For a reversible Markov kernel $H$ with stationary distribution $\mu$ and $g \in
%L^2(\mu)$, we follow \citet[proof of Theorem~4]{tierney1998ordering} by writing the asymptotic variance as \be v(g,H) = \lim_{n \rightarrow \infty}
%\int_{-1}^{1} \left( 1 + 2 \sum_{i=1}^{n} \frac{n-i}{n} x^{i} \right)
%e_{g,H}(dx) = \int_{-1}^1 \frac{1+x}{1-x} e_{g,H}(dx), \label{eqn:varlim} \ee where $e_{g,H}$ is the spectral measure associated
%with $g$ and $H$, i.e., the finite positive measure on $[-1,1]$ such
%that $\langle g, H^i g\rangle_\mu = \int_{-1}^1 x^i e_{g,H}(dx)$ for all
%$i \in \mathbb{N}$. 

Denote by $H_{2,\alpha}$ the transition kernel of the pseudo-marginal
algorithm with proposal kernel $q$, target marginal distribution
$\mu$, and estimator $T_{2,x,\alpha}$ of the unnormalized target. If we denote by $\{ (X_{t}^{(1)}, T_{t}^{(1)}) \}_{t \in \mathbb{N}}$ and $\{ (X_{t}^{(2)}, T_{t}^{(2)}) \}_{t \in \mathbb{N}}$ the Markov chains driven by the kernels $H_{2,\alpha}$ and $\alpha \mathrm{I} + (1-\alpha) H_2$ respectively, then $\{ X_{t}^{(1)} \}_{t \in \mathbb{N}}$ and $\{ X_{t}^{(2)} \}_{t \in \mathbb{N}}$ have the same distribution. 

If $T_{1,x} \leq_{cx}
 T_{2,x,\alpha}$ then by Theorem 3 of
 \cite{Andrieu2014establishing-some-order},
 \be \label{IneqViholaAppAsymVar}
v(f,H_{1}) &\leq v(f, H_{2,\alpha}) \ee
for any $f \in L^2(\mu)$.  We also have

\be \label{IneqLongCalc} v(f, H_{2,\alpha}) \leq \frac{1}{1-\alpha} v(f,H_{2}) + \frac{\alpha}{1 - \alpha} v(f) \leq \frac{1+
\alpha}{1 - \alpha} v(f, H_{2}),  \ee 
where the first inequality follows from Corollary 1 of \cite{latuszynski2013clts} and the second follows from the fact that $H_{2}$ has nonnegative spectrum. Combining this with
\eqref{IneqViholaAppAsymVar} yields the desired result. 

\end{proof}

\begin{proof}[Proof of Proposition~\ref{thm:unifcor}]
For any $M\geq 1$, let $T_{M,\bt}$ be the estimator
$\hat{\pi}_{K,M}(\bt|\dat)$ of the target $\pi_K$, so
that $T_{M,\bt,\alpha}$ is $T_{M,\bt}$ handicapped by $\alpha$ as
defined in \eqref{eqn:handicap} of the main document.  To obtain (\ref{eqn:firstBound}) of the main document, by
Theorem~\ref{thm:R1RM} it is sufficient to take $\alpha =
1-\frac{1}{M}$ and show that $T_{1,\bt} \leq_{cx} T_{M,\bt,\alpha}$.
By Proposition 2.2 of \cite{Lekela2014conditional-convex}, it is
furthermore sufficient to show that, for all $c \in \mathbb{R}$, \be
\label{IneqConvCond} \E[\vert T_{1,\bt} - c \vert] \leq \E[\vert T_{M,\bt,\alpha} - c
\vert].  \ee

Let $\mathrm{Bin}(n,\psi)$ denote the binomial distribution
with $n$ trials and success probability $\psi$. For a given point $\bt \in \Theta$, let $\tau = \tau(\bt) \equiv
\mathbb{P}[T_{1,\bt} \neq 0] = \int \textbf{1}_{\{\|\eta(\dat) -
  \eta(\by)\| < \epsilon\}} p(\by|\bt) d \by$. Noting that  $\frac{T_{1,\bt}}{\pi(\bt)} \in
\{0, 1 \}$, we may then write $T_{1,\bt}$
and $T_{M,\bt,\alpha}$ as the following mixtures
\be \frac{T_{1,\bt}}{\pi(\bt)} \stackrel{D}{=} \mathrm{Bin}(1,\tau), \qquad  \frac{T_{M,\bt,\alpha}}{ \pi(\bt)} \stackrel{D}{=} \frac{M-1}{M} \delta_{0} + \frac{1}{M}
\mathrm{Bin}(M,\tau),  \ee 
where $\delta_0$ is the unit point mass at zero.  Denote $T_{1,\bt}' =
\frac{T_{1,\bt}}{\pi(\bt)} $ and $T_{M,\bt,\alpha}' = \frac{T_{M,\bt,\alpha}}{\pi(\bt)}$. We will check condition \eqref{IneqConvCond} for
$T_{1,\bt}', T_{M,\bt,\alpha}'$ and $0 \leq c \leq 1$, then separately for $c < 0$
and $c > 1$. For $0 \leq c \leq 1$, we compute: 
\be \label{IneqMainCalcForProp} \E[\vert T_{M,\bt,\alpha}' -
c \vert] &= \left(1 - \frac{1}{M} \right) c + \frac{1}{M} (1 - \tau)^{M}
c + \frac{1}{M} \left( \sum_{j=1}^{M} \frac{M!}{j! (M-j)!} \tau^{j}
(1-\tau)^{M-j} \left( j - c \right) \right) \\ &= \tau + c \left( 1 -
\frac{2}{M}\left( 1 - (1-\tau)^{M} \right) \right) \geq \tau + c \left(1
- 2\tau \right)\qquad = \E[\vert T_{1,\bt}' - c \vert]. \ee For $c < 0$,
we have \be \E[\vert T_{M,\bt,\alpha}' - c \vert] = \E[T_{M,\bt,\alpha}'] - c = \E[T_{1,\bt}'] -
c = \E[\vert T_{1,\bt}' - c \vert],  \ee and the analogous calculation gives the same conclusion
for $c \geq M$. Finally, For $1< c < M$,  note
\be \label{IneqSimpleC} \E[\vert T_{1,\bt}' - 1 \vert] \leq \E[ \vert
T_{M,\bt,\alpha}' - 1 \vert], \qquad \E[\vert T_{1,\bt}' - M \vert] & = \E[\vert T_{M,\bt,\alpha}' -
M \vert].  \ee
Also, the functions $f_1(c) \equiv \E[\vert T_{1,\bt}' - c \vert]$ and
$f_2(c) \equiv \E[\vert T_{M,\bt,\alpha}' - c \vert]$ are continuous, convex and piecewise linear. For $c \geq 1$, they satisfy
\be \label{IneqDerC}
\frac{d}{dc} f_{1}(c) &= 1 \geq \frac{d}{dc} f_{2}(c)
\ee 
where the derivative of $f_{2}$ exists. Combining inequalities
\eqref{IneqSimpleC} and \eqref{IneqDerC}, we conclude that \be
\E[\vert T_{1,\bt}' - c \vert] \leq \E[\vert T_{M,\bt,\alpha}' - c
\vert] \ee for all $1 < c < M$. Thus we have verified
\eqref{IneqConvCond} and the proposition follows. 
\end{proof}

\section{Analysis of an Alternative ABC-MCMC Method}\label{sec:altmcmc}
We give a result analogous to Corollary~\ref{thm:likFree} for the
 version of ABC-MCMC proposed in \cite{wilk:08}, given in
Algorithm~\ref{likfree-mh} below.  The constant $c$ can be any value
satisfying $c \geq \sup_{\by} K(\eta(\dat) - \eta(\by))$.
\begin{algorithm}[h]
  Initialize $\bt^{(0)}$\;
  \For{$t=1$ to $n$}{
    Generate $\bm{\theta'} \sim q(\cdot \vert \bm{\theta}^{(t-1)})$, $\datt \sim p( \bm{y} | \bt')$, and $u \sim \mbox{Uniform}[0,1]$\;
    \eIf{$u \leq r(\bt^{(t-1)},\bt'|\datt) \equiv \frac{K(\eta(\dat) - \eta(\datt))}{c}
      \min\left\{1, \frac{\pi(\bt')q(\bt^{(t-1)} \vert \bt') }{\pi(\bt^{(t-1)})q(\bt' \vert \bt^{(t-1)}) } \right\}$,}{
        Set $\bm{\theta}^{(t)}$ = $\bm{\theta'}$
      }{
        Set  $\bm{\theta}^{(t)}$ = $\bm{\theta}^{(t-1)}$
      }
  }
\caption{Alternative ABC-MCMC Method\label{likfree-mh}
}
\end{algorithm}

Lemma~\ref{Thm:altmcmc} compares Algorithm~\ref{likfree-mh} (call its transition kernel $\tilde{Q}$) to $Q_\infty$.

\begin{lemma}\label{Thm:altmcmc}
For any $f\in L^2(\pi_K)$ we have $ v(f,
\tilde{Q}) \geq v( f, Q_{\infty})$.
\end{lemma}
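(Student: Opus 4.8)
The plan is to prove Lemma~\ref{Thm:altmcmc} through the Peskun--Tierney ordering of reversible kernels \citep{tierney1998ordering} rather than the convex-ordering machinery used elsewhere in the paper. The reason is structural: Algorithm~\ref{likfree-mh} carries no auxiliary weight across iterations, so $\{\bt^{(t)}\}$ is already a genuine Markov chain on $\Theta$ and is \emph{not} a pseudo-marginal algorithm in the frozen-weight sense of Algorithm~\ref{pseudomarg}. Its acceptance rule also places the Hastings minimum inside an extra multiplicative kernel factor, so Theorem~3 of \cite{Andrieu2014establishing-some-order} does not apply directly. What does hold is that $\tilde Q$ and $Q_\infty$ are honest reversible kernels on $\Theta$ with a common stationary law, and $Q_\infty$ dominates $\tilde Q$ off the diagonal.

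First I would identify the two kernels explicitly. Write $g(\bt) \equiv \int K(\eta(\dat)-\eta(\by))\,p(\by|\bt)\,d\by$, so that $\pi_K(\bt|\dat)\propto \pi(\bt)g(\bt)$. Integrating out the pseudo-sample $\datt$ in Algorithm~\ref{likfree-mh} gives the off-diagonal transition density of $\tilde Q$, namely $\tilde q(\bt,\bt') = q(\bt'|\bt)\,\tfrac{g(\bt')}{c}\min\{1,\,\tfrac{\pi(\bt')q(\bt|\bt')}{\pi(\bt)q(\bt'|\bt)}\}$. A one-line detailed-balance check shows $\pi_K(\bt)\tilde q(\bt,\bt') = \tfrac{g(\bt)g(\bt')}{c}\min\{\pi(\bt)q(\bt'|\bt),\,\pi(\bt')q(\bt|\bt')\}$ is symmetric in $(\bt,\bt')$, so $\tilde Q$ is reversible with respect to $\pi_K$. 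On the other side, $Q_\infty$ is the $M\to\infty$ limit of $Q_M$: the estimator $\hat{\pi}_{K,M}$ converges to the deterministic value $\pi(\bt)g(\bt)$, so $Q_\infty$ is exact Metropolis--Hastings for $\pi_K$ with proposal $q$, having off-diagonal density $q_\infty(\bt,\bt') = q(\bt'|\bt)\min\{1,\,\tfrac{\pi(\bt')g(\bt')q(\bt|\bt')}{\pi(\bt)g(\bt)q(\bt'|\bt)}\}$.

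The heart of the proof is a single pointwise inequality, $\tilde q(\bt,\bt') \le q_\infty(\bt,\bt')$ for all $\bt\neq\bt'$. Cancelling $q(\bt'|\bt)$ and setting $R = \tfrac{\pi(\bt')q(\bt|\bt')}{\pi(\bt)q(\bt'|\bt)}$, this is $\tfrac{g(\bt')}{c}\min(1,R) \le \min\{1,\,\tfrac{g(\bt')}{g(\bt)}R\}$. The only input needed is $g(\bt)\le c$ for every $\bt$, which follows from $c\ge \sup_\by K(\eta(\dat)-\eta(\by))$ together with $\int p(\by|\bt)\,d\by = 1$. Then $g(\bt')/c\le 1$ forces the left-hand side to be $\le 1$, while $g(\bt')/c \le g(\bt')/g(\bt)$ combined with $\min(1,R)\le R$ forces it to be $\le \tfrac{g(\bt')}{g(\bt)}R$; taking the minimum of the two bounds gives the claim with no case split. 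Since $Q_\infty$ and $\tilde Q$ are reversible with common stationary distribution $\pi_K$ and $Q_\infty$ dominates $\tilde Q$ off the diagonal, Tierney's ordering yields $v(f,Q_\infty)\le v(f,\tilde Q)$ for all $f\in L^2(\pi_K)$, which is the assertion of the lemma.

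I do not expect a serious obstacle once the comparison is set up this way; the delicate points are organizational rather than computational. The main one is the setup itself: verifying that the $M\to\infty$ limit $Q_\infty$ really is the exact Metropolis--Hastings chain for $\pi_K$ (identifying its off-diagonal density), and that Tierney's ordering is invoked for two kernels on the same space $\Theta$ with the same invariant law. I would also record the usual caveat that the statement is vacuous when $v(f,Q_\infty)=\infty$, consistent with the paper's convention, and note that the ordering holds in the extended reals, so no geometric ergodicity is required.
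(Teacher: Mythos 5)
Your proof is correct and takes essentially the same route as the paper's: both invoke Tierney's ordering theorem and reduce the lemma to a pointwise comparison of the marginalized acceptance probability of $\tilde{Q}$ with that of the exact Metropolis--Hastings kernel $Q_\infty$, using the same key bound $\int K(\eta(\dat)-\eta(\by))\,p(\by|\bt)\,d\by \le \sup_{\by} K(\eta(\dat)-\eta(\by)) \le c$. The only cosmetic differences are that you verify detailed balance for $\tilde{Q}$ explicitly and work with $c$ directly rather than passing through $\sup_{\by} K$, neither of which changes the argument.
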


\begin{proof}

Both $\tilde{Q}$ and $Q_\infty$ have stationary density $\pi_K$, so by Theorem 4 of \cite{tierney1998ordering}, 
it suffices to show that $Q_\infty(\bt,A\backslash \{\bt\}) \geq \tilde{Q}(\bt,A\backslash \{\bt\})$
for all $\bt \in \Theta$ and measurable $A \subset
\Theta$.  Since $\tilde{Q}$ and $Q_\infty$ use the same proposal density $q$, it is furthermore sufficient to show that for every
$\bt^{(t-1)}$ and $\bt'$, the acceptance probability of
$Q_\infty$ is at least as large as that of $\tilde{Q}$. Since $p(\cdot|\bt)$ is a probability density, 
\begin{align}\label{Eqn:supBound}
\int K(\eta(\dat)-\eta(\bm{y})) p(\bm{y}|\bt) d\bm{y} \leq \sup_{\bm{y}}
K(\eta(\dat)- \eta(\bm{y})) \qquad  \forall \bt \in \Theta.
\end{align}
So the acceptance probability
of $\tilde{Q}$, marginalizing over $\datt$, is
\be
a_{\mathrm{ABC}} &=  \int r(\bt^{(t-1)}, \bt'
| \by) p(\by | \bt') d \by\\
&\leq  \frac{\int K(\eta(\dat)-\eta(\by)) p(\by | \bt') d\by}{\sup_{\by} K(\eta(\dat)-\eta(\by))} 
\min\left\{1,\frac{\pi(\bt') q(\bt^{(t-1)} \vert \bt')}{\pi(\bt^{(t-1)}) q(\bt' \vert \bt^{(t-1)})}\right\}\\
&\leq   \min\left\{1,
\frac{\int K(\eta(\dat)-\eta(\by)) p(\by | \bt') d\by}{\sup_{\by}
  K(\eta(\dat)-\eta(\by))} \left(\frac{\pi(\bt') q(\bt^{(t-1)} \vert
  \bt')}{\pi(\bt^{(t-1)}) q(\bt' \vert
  \bt^{(t-1)})}\right)\right\}.  \label{eqn:accabcmc}
\ee
The acceptance probability of the Metropolis-Hastings algorithm is
\be
a_{\mathrm{MH}} = \min\left\{1,
\frac{\int K(\eta(\dat)-\eta(\by)) p(\by | \bt') d\by}{\int
  K(\eta(\dat)-\eta(\by)) p(\by | \bt^{(t-1)}) d\by}
\left(\frac{\pi(\bt') q(\bt^{(t-1)} \vert \bt')}{\pi(\bt^{(t-1)}) q(\bt' \vert
  \bt^{(t-1)})}\right)\right\}.  \label{eqn:accMH}
\ee
Using \eqref{Eqn:supBound}, \eqref{eqn:accabcmc}, and \eqref{eqn:accMH},
$a_{\mathrm{ABC}}/a_{\mathrm{MH}} \leq 1$. 
\end{proof}

\bibliographystyle{chicago}
\bibliography{abc}

\begin{thebibliography}{}

\bibitem[\protect\citeauthoryear{Andrieu, Doucet, and Holenstein}{Andrieu
  et~al.}{2010}]{andrieu2010particle}
Andrieu, C., A.~Doucet, and R.~Holenstein (2010).
\newblock Particle {M}arkov chain {M}onte {C}arlo methods.
\newblock {\em Journal of the Royal Statistical Society: Series B\/}~{\em 72},
  269--342.

\bibitem[\protect\citeauthoryear{Andrieu and Roberts}{Andrieu and
  Roberts}{2009}]{andrieu2009the-pseudo-marginal}
Andrieu, C. and G.~O. Roberts (2009).
\newblock The pseudo-marginal approach for efficient {M}onte {C}arlo
  computations.
\newblock {\em Annals of Statistics\/}~{\em 37}, 697--725.

\bibitem[\protect\citeauthoryear{Andrieu and Vihola}{Andrieu and
  Vihola}{2014}]{Andrieu2014establishing-some-order}
Andrieu, C. and M.~Vihola (2014).
\newblock Establishing some order amongst exact approximations of {MCMC}s.
\newblock {\em arXiv preprint, arXiv:1404.6909v1\/}.

\bibitem[\protect\citeauthoryear{Doucet, Pitt, Deligiannidis, and Kohn}{Doucet
  et~al.}{2014}]{Doucet2014efficient-implementation}
Doucet, A., M.~Pitt, G.~Deligiannidis, and R.~Kohn (2014).
\newblock Efficient implementation of {M}arkov chain {M}onte {C}arlo when using
  an unbiased likelihood estimator.
\newblock {\em arXiv preprint, arXiv:1210.1871v3\/}.

\bibitem[\protect\citeauthoryear{Flury and Shephard}{Flury and
  Shephard}{2011}]{flury2011bayesian}
Flury, T. and N.~Shephard (2011).
\newblock {B}ayesian inference based only on simulated likelihood: particle
  filter analysis of dynamic economic models.
\newblock {\em Econometric Theory\/}~{\em 27\/}(5), 933--956.

\bibitem[\protect\citeauthoryear{Guan and Krone}{Guan and
  Krone}{2007}]{guan:kron:07}
Guan, Y. and S.~M. Krone (2007).
\newblock Small-world {MCMC} and convergence to multi-modal distributions: from
  slow mixing to fast mixing.
\newblock {\em Annals of Applied Probability\/}~{\em 17}, 284--304.

\bibitem[\protect\citeauthoryear{{L}atuszy{\'n}ski and
  Roberts}{{L}atuszy{\'n}ski and Roberts}{2013}]{latuszynski2013clts}
{L}atuszy{\'n}ski, K. and G.~O. Roberts (2013).
\newblock Clts and asymptotic variance of time-sampled markov chains.
\newblock {\em Methodology and Computing in Applied Probability\/}~{\em
  15\/}(1), 237--247.

\bibitem[\protect\citeauthoryear{Lee and Latuszynski}{Lee and
  Latuszynski}{2013}]{lee2013ergodicity}
Lee, A. and K.~Latuszynski (2013).
\newblock Variance bounding and geometric ergodicity of {M}arkov chain {M}onte
  {C}arlo kernels for approximate {B}ayesian computation.
\newblock {\em arXiv preprint arXiv:1210.6703\/}.

\bibitem[\protect\citeauthoryear{Leskel\"a and Vihola}{Leskel\"a and
  Vihola}{2014}]{Lekela2014conditional-convex}
Leskel\"a, L. and M.~Vihola (2014).
\newblock Conditional convex orders and measurable martingale couplings.
\newblock {\em arXiv preprint, arXiv:1404.0999\/}.

\bibitem[\protect\citeauthoryear{Marin, Pudlo, Robert, and Ryder}{Marin
  et~al.}{2012}]{marin2012approximate}
Marin, J.-M., P.~Pudlo, C.~P. Robert, and R.~J. Ryder (2012).
\newblock Approximate {B}ayesian computational methods.
\newblock {\em Statistics and Computing\/}~{\em 22}, 1167--1180.

\bibitem[\protect\citeauthoryear{Marjoram, Molitor, Plagnol, and
  Tavar{\'e}}{Marjoram et~al.}{2003}]{marj:03}
Marjoram, P., J.~Molitor, V.~Plagnol, and S.~Tavar{\'e} (2003).
\newblock {M}arkov chain {M}onte {C}arlo without likelihoods.
\newblock {\em Proceedings of the National Academy of Sciences\/}~{\em
  100\/}(26), 15324--15328.

\bibitem[\protect\citeauthoryear{Narayanan and Rakhlin}{Narayanan and
  Rakhlin}{2010}]{nara:rakh:10}
Narayanan, H. and A.~Rakhlin (2010).
\newblock Random walk approach to regret minimization.
\newblock In {\em Advances in Neural Information Processing Systems}.

\bibitem[\protect\citeauthoryear{Pitt, Silva, Giordani, and Kohn}{Pitt
  et~al.}{2012}]{pitt2012on-some}
Pitt, M.~K., R.~d.~S. Silva, P.~Giordani, and R.~Kohn (2012).
\newblock On some properties of {M}arkov chain {M}onte {C}arlo simulation
  methods based on the particle filter.
\newblock {\em Journal of Econometrics\/}~{\em 171\/}(2), 134--151.

\bibitem[\protect\citeauthoryear{Roberts and Rosenthal}{Roberts and
  Rosenthal}{2008}]{robe:rose:08}
Roberts, G.~O. and J.~S. Rosenthal (2008).
\newblock Variance bounding {M}arkov chains.
\newblock {\em Annals of Applied Probability\/}~{\em 18}, 1201--1214.

\bibitem[\protect\citeauthoryear{Sherlock, Thiery, Roberts, and
  Rosenthal}{Sherlock et~al.}{2013}]{sherlock2013on-the-efficiency}
Sherlock, C., A.~H. Thiery, G.~O. Roberts, and J.~S. Rosenthal (2013).
\newblock On the efficiency of pseudo-marginal random walk {M}etropolis
  algorithms.
\newblock {\em arXiv preprint arXiv:1309.7209\/}.

\bibitem[\protect\citeauthoryear{Tavare, Balding, Griffiths, and
  Donnelly}{Tavare et~al.}{1997}]{tavare1997inferring}
Tavare, S., D.~J. Balding, R.~Griffiths, and P.~Donnelly (1997).
\newblock Inferring coalescence times from {DNA} sequence data.
\newblock {\em Genetics\/}~{\em 145\/}(2), 505--518.

\bibitem[\protect\citeauthoryear{Tierney}{Tierney}{1998}]{tierney1998ordering}
Tierney, L. (1998).
\newblock A note on {M}etropolis-{H}astings kernels for general state spaces.
\newblock {\em Annals of Applied Probability\/}~{\em 8}, 1--9.

\bibitem[\protect\citeauthoryear{Wilkinson}{Wilkinson}{2013}]{wilk:08}
Wilkinson, R.~D. (2013).
\newblock Approximate {B}ayesian computation ({ABC}) gives exact results under
  the assumption of model error.
\newblock {\em Statistical applications in genetics and molecular
  biology\/}~{\em 12\/}(2), 129--141.

\bibitem[\protect\citeauthoryear{Woodard, Schmidler, and Huber}{Woodard
  et~al.}{2009}]{wood:schm:hube:09a}
Woodard, D.~B., S.~C. Schmidler, and M.~Huber (2009).
\newblock Conditions for rapid mixing of parallel and simulated tempering on
  multimodal distributions.
\newblock {\em Annals of Applied Probability\/}~{\em 19}, 617--640.

\end{thebibliography}

\end{document}